\documentclass[conference]{IEEEtran}

\IEEEoverridecommandlockouts
\usepackage{graphicx}
\usepackage{amsmath}
\usepackage[noend]{algpseudocode}
\usepackage{algorithmicx,algorithm}
\usepackage{amsthm}
\usepackage{amsfonts}
\usepackage{subfigure} 
\usepackage{color}
\usepackage{cite}
\usepackage{setspace}
\newtheorem{Lemma}{Lemma}
\newtheorem{Theorem}{Theorem}

\newtheorem{Proposition}{Proposition}
\usepackage{amssymb}
\usepackage{color}
\graphicspath{{images/}}
\usepackage{booktabs}
\usepackage{multirow} 
\usepackage{makecell}
\usepackage[numbers,sort&compress]{natbib}

\usepackage{stfloats}
\usepackage{amsmath}
\usepackage{amssymb}
\usepackage{enumitem}
\usepackage{threeparttable}
\usepackage{comment}
\usepackage{bm}
\usepackage{colortbl}
\usepackage{tabularx}
\usepackage[table]{xcolor}
\begin{document}	
	\title{Fluid-Dynamic Interference Modeling for LEO Mega-Constellations: A Spatiotemporal Kinetic Field Approach}
	\author{Wen-Yu Dong, Weiwei Jiang,~\IEEEmembership{Senior Member,~IEEE}, Song Zhao, Rui-Si Han,\\ Qi Bi,~\IEEEmembership{Fellow,~IEEE}, Sheng Chen,~\IEEEmembership{Life Fellow,~IEEE}	%
		
		\thanks{W.-Y. Dong, S. Zhao and Q. Bi are with Future Technology Research Center, China Telecom Research Institute, Beijing 102209, China (E-mails: dongwy@chinatelecom.cn; zhaosong1@chinatelecom.cn; qibi@chinatelecom.cn).} %
		\thanks{W. Jiang is with the School of Information and Communication Engineering, Beijing University of Posts and Telecommunications, Beijing, 100876, China (Email: jww@bupt.edu.cn)}
		\thanks{R.-S.  Han is with Cloud Network Operating System R\&D Center, China Telecom, Beijing 102209, China (E-mail: hanruisi@chinatelecom.cn).} %
		\thanks{S. Chen is with the School of Electronics and Computer Science, University of Southampton, Southampton SO17 1BJ, U.K. (E-mail: sqc@ecs.soton.ac.uk).} %
		\vspace*{-8mm}
	}
	
	
	
	\maketitle 
	\vspace{-1.5cm}
	\pagestyle{empty}
	\thispagestyle{empty}
	\begin{abstract}
		Low Earth orbit (LEO) mega-constellations create a highly non-stationary interference environment that cannot be accurately captured by static stochastic-geometry snapshots. This paper proposes a kinetic interference field framework that models the constellation as a compressible fluid shell evolving under orbital kinematics. By mapping satellite motion into a continuum flux field, we derive a hydrodynamic conservation law for the aggregate interference and obtain a closed-form expression for the time-varying outage probability via moment matching. The analysis reveals that high-latitude ``interference surges'' are a direct consequence of orbital compression and boundary flux, rather than random anomalies. Numerical validation against ephemeris-driven Monte Carlo simulations confirms the accuracy of the framework across time evolution, latitude, and design parameters. Leveraging the closed-form model, we further show that the conventional $90^{\circ}$ polar-orbit design is not universally outage-optimal. Instead, an inclination angle near $79^{\circ}$ at low altitude achieves a favorable trade-off between coverage continuity and geometric interference isolation. The proposed framework provides a tractable analytical tool for interference-aware 6G non-terrestrial network (NTN) design.
	\end{abstract}
	\begin{IEEEkeywords}
		LEO mega-constellations, NTNs, interference modeling, continuum field theory, outage analysis.
	\end{IEEEkeywords}
	
	\vspace*{-2mm}
	\section{Introduction}\label{S1}
	
	\IEEEPARstart{T}{he} integration of low Earth orbit (LEO) mega-constellations into the 6G ecosystem represents a major transition from static terrestrial topologies to highly dynamic non-terrestrial networks (NTNs) \cite{Yaacoub2020}. Unlike geostationary systems, LEO constellations consist of dense shells of satellites orbiting at high velocities. While these architectures promise seamless global coverage, the requisite aggressive frequency reuse creates an analytically challenging, interference-limited environment with rapid spatiotemporal evolution.
	A fundamental analytical bottleneck lies in the intrinsic non-stationarity of the LEO interference field. Unlike terrestrial networks modeled by stationary point processes \cite{Andrews2011,Haenggi2012}, LEO networks are governed by deterministic Keplerian mechanics, which enforce rigorous conservation of angular momentum. This orbital physics manifests as a ``polar compression'' phenomenon---or orbital caustics---where satellite density diverges near the inclination limits and becomes sparse near the equator \cite{Walker1984}. Consequently, the aggregate interference is not merely a random spatial variable but a structured spatiotemporal process driven by orbital convergence.
	
	Conventional analyses face a persistent trade-off between scalability and kinematic consistency. Studies that predominantly rely on static stochastic geometry (SG) snapshots \cite{DongTCOM, DongJSAC, IoTJDong} fundamentally decouple the temporal correlation of interference from the underlying kinematics. Such static approximations often struggle to capture the duration and intensity of interference surges. Conversely, while discrete-event simulations offer geometric fidelity, they lack analytical tractability and suffer from severe computational scalability bottlenecks for emerging mega-constellations \cite{Portillo2019}. Furthermore, traditional stochastic mobility models designed for terrestrial networks are not well suited to capturing the deterministic nature of Keplerian orbits. Recent fluid-spatiotemporal frameworks have demonstrated the value of density--flux representations for modeling non-stationary information transport and extending continuum-field methods to network provisioning and routing optimization \cite{Dong2026FSTSG,Dong2026TMC,Dong2026VDR}. This methodological gap motivates their further development for dynamic interference analysis in LEO mega-constellations.  
	
	In this paper, we propose a kinetic interference field (KIF) framework, which conceptualizes the mega-constellation not as a set of discrete nodes, but as a compressible fluid shell evolving under hydrodynamic conservation laws. By mapping discrete orbital kinematics to continuous fluid flux \cite{Lighthill1955, Lasry2007}, we transform the stochastic interference summation into a deterministic partial differential equation (PDE). This framework bridges the gap between the discrete granularity of orbital dynamics and the analytical tractability required for system-level optimization.
	The primary contributions of this paper are summarized as follows.
	
	\begin{itemize}
		\item \textit{Kinetic Field Framework:} We establish a continuum theory mapping discrete Walker constellation kinematics to a macroscopic fluid flow. This formulation provides a tractable mean-field representation of the aggregate interference over the joint probability space of traffic and beam orientations. By transforming the discrete interference summation into a fixed-resolution numerical quadrature, its evaluation avoids explicit scaling with the number of satellites, alleviating the scalability bottleneck of ephemeris tracking.
		\item \textit{Hydrodynamic PDE Derivation:} We derive a governing PDE that characterizes the spatiotemporal evolution of the aggregate interference field. By decomposing the non-stationary fluctuations into geometric advection, traffic gradient drift and boundary flux, this formulation analytically shows that high-latitude interference surges are deterministic consequences of orbital flux convergence.
		\item \textit{Dynamic Reliability Analysis:} To bridge the macroscopic fluid flow and the microscopic discrete network, we formulate closed-form expressions for the time-varying outage probability (TVOP). By employing a moment-matching Gamma approximation, we characterize the second-order spatial shot noise of the thinned satellite process and obtain a tractable metric for dynamic reliability evaluation.
		\item \textit{Topology Optimization:} Leveraging the analytical tractability of the framework, we evaluate the altitude--inclination design space beyond the conventional strict polar-orbit configuration. Under the evaluated system parameters, an inclination angle of $\iota \approx 79^{\circ}$ at low altitudes provides a favorable trade-off between global coverage and geometric interference isolation.
	\end{itemize}
	
	\vspace*{-2mm}
	\section{System Model and Kinetic Formulation}\label{S2}
	\begin{figure}[!t]
		\begin{center}
			\includegraphics[width=0.96\columnwidth]{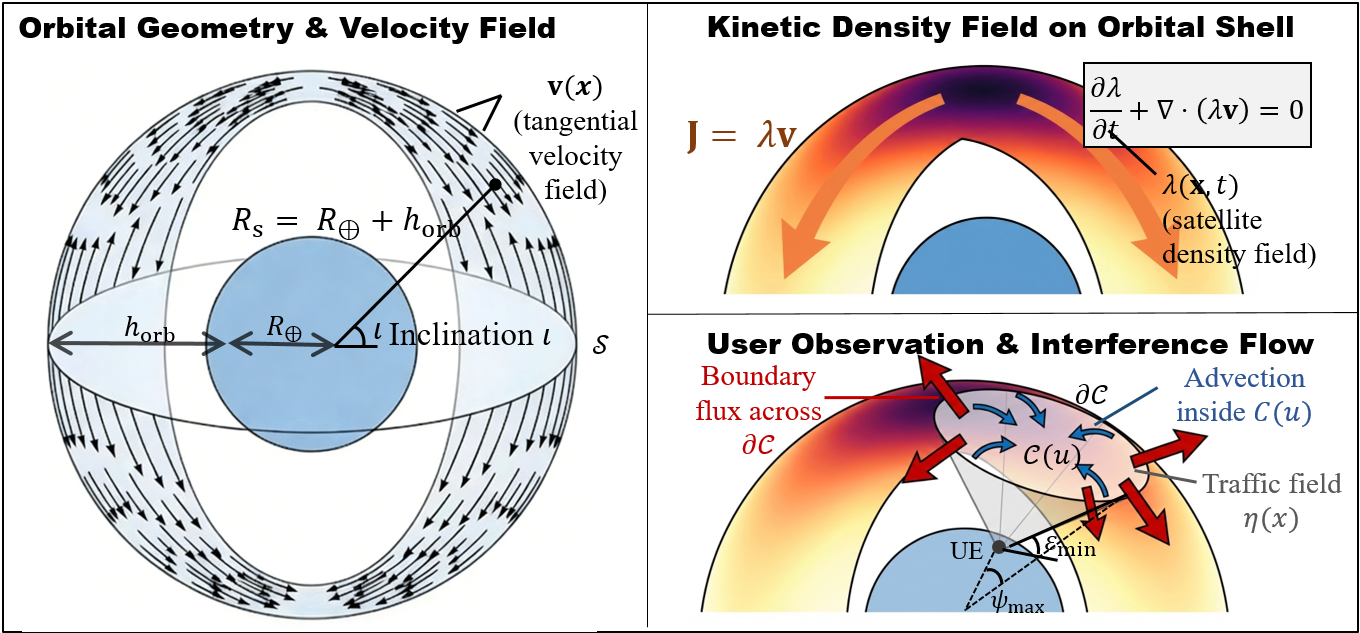}
		\end{center}
		\vspace*{-5mm}
		\caption{Geometric representation of the proposed KIF framework.}
		\label{fig:system_model} 
		\vspace*{-5mm}
	\end{figure}
	
	As illustrated in Fig. \ref{fig:system_model}, we consider a downlink LEO satellite communication network. The Earth is modeled as a sphere of radius $R_{\oplus}$. Satellites are distributed on a concentric orbital shell $\mathcal{S}$ of radius $R_{\mathrm{s}} = R_{\oplus} + h_{\mathrm{orb}}$, where $h_{\mathrm{orb}}$ is the orbital altitude.
	
	\vspace*{-2mm}
	\subsection{Orbital Kinematics}
	
	We consider a Walker delta constellation consisting of $N$ satellites with a common inclination angle $\iota$. Assuming circular orbits, the satellite maintains a constant linear orbital speed $v_{\mathrm{orb}} = \sqrt{\mu / R_{\mathrm{s}}}$, where $\mu$ represents the Earth standard gravitational parameter. The velocity field $\mathbf{v}(\mathbf{x})$ at a position $\mathbf{x}$ parameterized by latitude $\phi$ decomposes into latitudinal $v_{\phi}$ and longitudinal $v_{\theta}$ components:
	\begin{equation} 
		\label{eq:velocity_components}
		v_{\phi}(\phi) = v_{\mathrm{orb}} \sqrt{1 - \frac{\cos^2 \iota}{\cos^2 \phi}}, \quad v_{\theta}(\phi) = v_{\mathrm{orb}} \frac{\cos \iota}{\cos \phi},
	\end{equation}
	defined for the region $|\phi| \le \iota$. Equation \eqref{eq:velocity_components} indicates that the latitudinal velocity $v_{\phi}$ vanishes as the satellite approaches the inclination limits $|\phi| \to \iota$. This kinematic stagnation drives the density singularity at the poles, known as orbital caustics.
	
	\vspace*{-2mm}
	\subsection{Continuum Field Representation}
	
	To characterize the spatiotemporal dynamics of the mega-constellation, we transition from a discrete microscopic state to a macroscopic continuum representation. The discrete satellite distribution is replaced by a continuous satellite-density field $\lambda(\mathbf{x}, t)$. The spatiotemporal evolution of the satellite density field on the shell $\mathcal{S}$ is governed by the continuity equation:
	\begin{equation} 
		\label{eq:continuity}
		\frac{\partial \lambda(\mathbf{x}, t)}{\partial t} + \nabla \cdot (\lambda(\mathbf{x}, t) \mathbf{v}(\mathbf{x})) = 0,
	\end{equation}
	where $\nabla \cdot$ represents the surface divergence operator on the shell $\mathcal{S}$. For a Walker constellation with uniform orbital spacing, the static density profile $\bar{\lambda}(\phi)$ is derived to ensure that the flux is conserved across latitudes:
	\begin{equation} 
		\label{eq:static_density}
		\bar{\lambda}(\phi) = \frac{N}{2\pi^2 R_{\mathrm{s}}^2 \sqrt{\sin^2 \iota - \sin^2 \phi}}, \quad \text{for } |\phi| < \iota.
	\end{equation}
	
	\vspace*{-2mm}
	\subsection{Channel Model and User Association}
	
	A reference ground user equipment (UE) is located at $\mathbf{u}$ with $\|\mathbf{u}\| = R_{\oplus}$. A satellite at position $\mathbf{x} \in \mathcal{S}$ is visible to the user if its elevation angle exceeds a minimum threshold $\varepsilon_{\min}$. The set of visible satellites forms a spherical cap $\mathcal{C}(\mathbf{u})$ on the shell $\mathcal{S}$. The user selects the serving satellite $\mathbf{x}_0 \in \mathcal{C}(\mathbf{u})$ based on the maximum received signal power criterion. The received signal power is given by:
	\begin{equation}\label{eq:signal} 
		S(t) = P_{\mathrm{tx}} G_{\mathrm{tx}}(\theta_{0}) G_{\mathrm{rx}}(0) \|\mathbf{x}_0 - \mathbf{u}\|^{-\alpha} H_0,
	\end{equation}
	where $P_{\mathrm{tx}}$ denotes the transmit power, $\alpha$ is the path loss exponent, and $H_0$ models the small-scale fading channel gain, while $G_{\mathrm{tx}}(\cdot)$ and $G_{\mathrm{rx}}(\cdot)$ denote the transmit and receive antenna gain patterns, respectively, with $\theta_{0}$ representing the instantaneous off-axis angle of the transmit beam relative to the user direction. 
	
	\vspace*{-2mm}
	\subsection{Aggregate Interference Formulation}
	
	All visible satellites within the spherical cap $\mathcal{C}(\mathbf{u}) \setminus \{\mathbf{x}_0\}$ constitute the set of potential interferers. We introduce a location-dependent traffic activation field $\eta(\mathbf{x}) \in [0, 1]$, which quantifies the transmission probability of a satellite at location $\mathbf{x}$. The random antenna gain is approximated by the effective interference gain $\bar{G}_{\mathrm{tx}}$, defined as the statistical expectation of the radiated gain over the satellite coverage footprint.
	
	Within the fluid dynamic framework, the aggregate interference $I(\mathbf{u}, t)$ is obtained by integrating these continuous field components over the visibility domain:
	\begin{equation} 
		\label{eq:fluid_interference}
		I(\mathbf{u}, t)\! =\! \int_{\mathcal{C}(\mathbf{u})} \! \underbrace{\eta(\mathbf{x}) P_{\mathrm{tx}} \bar{G}_{\mathrm{eq}}(\mathbf{u}, \mathbf{x}) \|\mathbf{x} - \mathbf{u}\|^{-\alpha}}_{\text{Effective kernel } \tilde{g}(\mathbf{u}, \mathbf{x})} \lambda(\mathbf{x}, t)  \mathrm{d}\mathbf{x},
	\end{equation}
	where $\bar{G}_{\mathrm{eq}}(\mathbf{u}, \mathbf{x}) = \bar{G}_{\mathrm{tx}} G_{\mathrm{rx}}(\theta_{\mathrm{rx}})$ encapsulates the expected combined antenna gain, with $\theta_{\mathrm{rx}}$ denoting the off axis angle of the interfering signal relative to the user receiving boresight. The effective kernel $\tilde{g}(\mathbf{u}, \mathbf{x})$ unifies path loss, mean-field beam directionality, and local traffic probability into a deterministic weighting function.
	
	\vspace*{-2mm}
	\section{Fluid Dynamic Interference Analysis}\label{S3}
	
	In this section, we derive the governing equations of the KIF. Leveraging the continuum representation, we characterize the spatiotemporal evolution of the aggregate interference through a PDE driven by the orbital velocity field.
	
	\vspace*{-2mm}
	\subsection{Dynamics of Aggregate Interference}
	
	The instantaneous aggregate interference power $I(\mathbf{u}, t)$ is an integral functional of the satellite density field $\lambda(\mathbf{x}, t)$ weighted by the effective kernel $\tilde{g}(\mathbf{u}, \mathbf{x})$. We explicitly decompose the kernel as $\tilde{g}(\mathbf{u}, \mathbf{x}) = \eta(\mathbf{x}) g(\mathbf{u}, \mathbf{x})$, where $\eta(\mathbf{x})$ denotes the spatial traffic probability field and $g(\mathbf{u}, \mathbf{x}) = P_{\mathrm{tx}} \bar{G}_{\mathrm{eq}}(\mathbf{u}, \mathbf{x}) \|\mathbf{x} - \mathbf{u}\|^{-\alpha}$ represents the deterministic channel gain. In the Earth-fixed frame, the user location $\mathbf{u}$ is static, rendering the integration domain $\mathcal{C}(\mathbf{u})$ time-invariant. Consequently, the temporal evolution of $I(\mathbf{u}, t)$ is governed solely by the hydrodynamic flow of the active density field $\lambda(\mathbf{x}, t)$ passing through this fixed observation window.
	
	\begin{Proposition}
		\label{prop:interference_pde}
		The temporal rate of change of the aggregate interference $I(\mathbf{u}, t)$ is governed by the superposition of orbital advection, traffic gradient drift, and boundary flux:
		\begin{equation} 
			\label{eq:interference_evolution}
			\frac{\partial I(\mathbf{u}, t)}{\partial t} = \mathcal{T}_{\mathrm{adv}}^{\mathrm{geo}} + \mathcal{T}_{\mathrm{adv}}^{\mathrm{traf}} - \mathcal{T}_{\mathrm{flux}},
		\end{equation}
		where the constituent terms are defined as:
		\begin{align} 
			\mathcal{T}_{\mathrm{adv}}^{\mathrm{geo}} &= \int_{\mathcal{C}(\mathbf{u})} \eta(\mathbf{x}) \lambda(\mathbf{x}, t) \mathbf{v}(\mathbf{x}) \cdot \nabla g(\mathbf{u}, \mathbf{x}) \, \mathrm{d}\mathbf{x}, \label{eq:term_geo} \\
			\mathcal{T}_{\mathrm{adv}}^{\mathrm{traf}} &= \int_{\mathcal{C}(\mathbf{u})} g(\mathbf{u}, \mathbf{x}) \lambda(\mathbf{x}, t) \mathbf{v}(\mathbf{x}) \cdot \nabla \eta(\mathbf{x}) \, \mathrm{d}\mathbf{x}, \label{eq:term_traffic} \\
			\mathcal{T}_{\mathrm{flux}} &= \oint_{\partial \mathcal{C}(\mathbf{u})} \eta(\mathbf{x}) g(\mathbf{u}, \mathbf{x}) \lambda(\mathbf{x}, t) \mathbf{v}(\mathbf{x}) \cdot \mathbf{n}(\mathbf{x}) \, \mathrm{d}l. \label{eq:term_flux}
		\end{align}
		Here, $\nabla$ denotes the surface gradient operator on the orbital shell, $\mathbf{n}(\mathbf{x})$ is the outward unit normal vector on the boundary $\partial \mathcal{C}(\mathbf{u})$, and $\mathrm{d}l$ is the differential arc length.
	\end{Proposition}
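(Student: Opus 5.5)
The plan is to differentiate the integral representation \eqref{eq:fluid_interference} in time, substitute the continuity equation \eqref{eq:continuity}, and integrate by parts on the shell $\mathcal{S}$ via the surface divergence theorem.

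\textbf{Step 1 (differentiate under the integral).} As noted before the proposition, $\mathbf{u}$ is fixed in the Earth-fixed frame, so $\mathcal{C}(\mathbf{u})$ does not depend on $t$. The kernel $\tilde{g}(\mathbf{u},\mathbf{x})=\eta(\mathbf{x})g(\mathbf{u},\mathbf{x})$ is also time-independent. Assuming $\lambda$ is $C^1$ in $t$ and $\partial_t\lambda$ is integrable on the cap, Leibniz's rule gives
\begin{equation*}
\frac{\partial I(\mathbf{u},t)}{\partial t}=\int_{\mathcal{C}(\mathbf{u})}\eta(\mathbf{x})g(\mathbf{u},\mathbf{x})\,\frac{\partial\lambda(\mathbf{x},t)}{\partial t}\,\mathrm{d}\mathbf{x}.
\end{equation*}
Substituting \eqref{eq:continuity} turns the integrand into $-\eta g\,\nabla\cdot(\lambda\mathbf{v})$.

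\textbf{Step 2 (product rule and divergence theorem).} On the shell I will use the identity
\begin{equation*}
\nabla\cdot(\eta g\lambda\mathbf{v})=\eta g\,\nabla\cdot(\lambda\mathbf{v})+\lambda\mathbf{v}\cdot\nabla(\eta g).
\end{equation*}
This lets me write $-\eta g\,\nabla\cdot(\lambda\mathbf{v})=\lambda\mathbf{v}\cdot\nabla(\eta g)-\nabla\cdot(\eta g\lambda\mathbf{v})$. I then integrate over $\mathcal{C}(\mathbf{u})$. Because $\mathbf{v}$ is tangent to $\mathcal{S}$, the divergence theorem for a smooth domain on a Riemannian 2-manifold applies. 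It converts the last term into $-\oint_{\partial\mathcal{C}(\mathbf{u})}\eta g\lambda\,\mathbf{v}\cdot\mathbf{n}\,\mathrm{d}l$, where $\mathbf{n}$ is the outward conormal tangent to $\mathcal{S}$. That integral is exactly $-\mathcal{T}_{\mathrm{flux}}$ in \eqref{eq:term_flux}.

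\textbf{Step 3 (split the advection term).} Expanding $\nabla(\eta g)=\eta\nabla g+g\nabla\eta$ splits $\int\lambda\mathbf{v}\cdot\nabla(\eta g)$ into two pieces. The first is $\mathcal{T}_{\mathrm{adv}}^{\mathrm{geo}}$ in \eqref{eq:term_geo} and the second is $\mathcal{T}_{\mathrm{adv}}^{\mathrm{traf}}$ in \eqref{eq:term_traffic}. Together with Step 2 this yields \eqref{eq:interference_evolution}.

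\textbf{Main obstacle: regularity.} The cap boundary $\partial\mathcal{C}(\mathbf{u})$ is a circle, so it is smooth. The difficulty is the orbital caustics: by \eqref{eq:static_density}, $\bar\lambda$ diverges as $|\phi|\to\iota$, and $\mathbf{v}$ from \eqref{eq:velocity_components} is only defined for $|\phi|\le\iota$. If the cap reaches the inclination limit, the divergence theorem cannot be applied directly. I would handle this by excision:
\begin{enumerate}[label=(\roman*)]
\item Apply the argument on $\mathcal{C}_\delta=\mathcal{C}(\mathbf{u})\cap\{|\phi|\le\iota-\delta\}$.
\item Show that the flux through the artificial boundary $|\phi|=\iota-\delta$ vanishes as $\delta\to0$. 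The normal velocity there is $v_\phi\sim\sqrt{\sin^2\iota-\sin^2\phi}$, while $\lambda\sim(\sin^2\iota-\sin^2\phi)^{-1/2}$. So the normal flux $\lambda v_\phi$ stays bounded, and in the stationary profile it equals the conserved constant carried across latitudes.
\item Note that flux arriving at the turning latitude is reflected back, because satellites reverse their latitudinal motion there and incoming and outgoing branches cancel. This requires treating $\lambda$ as the sum of ascending and descending sheets, each carrying its own $\pm v_\phi$. I would make that decomposition explicit, apply Steps 1–3 sheet by sheet, and sum.
\item Use the integrable singularity $(\cdot)^{-1/2}$ with dominated convergence to pass to the limit in the area integrals.
\end{enumerate}
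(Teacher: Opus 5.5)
The paper gives no written proof of Proposition~\ref{prop:interference_pde}. Your Steps 1--3 (differentiate over the time-invariant cap, substitute \eqref{eq:continuity}, apply the surface product rule and divergence theorem, then split $\nabla(\eta g)$) are exactly the Reynolds-transport argument the paper implicitly relies on and explicitly invokes for the variance law \eqref{eq:variance_evolution}, so your proposal is correct and takes the same route. Your caustic excision goes beyond the paper, with one caveat: (ii) by itself gives a finite but nonzero limit (the conserved latitudinal flux), so the vanishing of the artificial-boundary term rests entirely on the ascending/descending cancellation in (iii), and the identity you actually obtain is the stated one with $\lambda\mathbf{v}$ read as the net two-sheet flux $\lambda_{+}\mathbf{v}_{+}+\lambda_{-}\mathbf{v}_{-}$, not $\lambda$ times the single-signed field in \eqref{eq:velocity_components}.
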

	
	Proposition \ref{prop:interference_pde} provides a physical interpretation of the non-stationary interference environment in LEO networks. The advection dynamics are decomposed into two physical components. The geometric drift $\mathcal{T}_{\mathrm{adv}}^{\mathrm{geo}}$ characterizes the variation in the physical link budget caused by satellite motion, weighted by the local traffic probability $\eta(\mathbf{x})$. It accounts for the temporal rate of change in path loss and the mean-field beam scanning effect. The traffic gradient drift $\mathcal{T}_{\mathrm{adv}}^{\mathrm{traf}}$ represents the interference fluctuation driven by satellites crossing the boundaries of traffic demand zones. The convective derivative $\mathbf{v} \cdot \nabla \eta$ functions as a distributed source or sink term for the KIF. It introduces new interference power when satellites move from a dormant region into an active service area. Conversely, it generates a dissipative flux when active satellites cease transmission, resolving the apparent discontinuity in the interference field.
	The boundary flux $\mathcal{T}_{\mathrm{flux}}$ quantifies the net flow of aggregate interference across the visibility horizon. In high-latitude regions, orbital convergence leads to a rapid interference surge. A critical mathematical concern arises at the inclination limits ($\phi \to \pm\iota$), where the latitudinal velocity vanishes ($v_{\phi} \to 0$), causing the kinetic density $\lambda$ to diverge as a singularity of the form $1/\sqrt{\sin^2 \iota - \sin^2 \phi}$. 
	
	\begin{Proposition}
		\label{prop:boundedness}
		Despite the kinetic density diverging at the orbital caustics ($\phi \to \pm\iota$), the aggregate interference power $I(\mathbf{u}, t)$ is strictly bounded for any non-zero orbital altitude $h_{\mathrm{orb}} > 0$.
	\end{Proposition}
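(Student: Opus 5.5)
The plan is to bound the integral in \eqref{eq:fluid_interference} by splitting it into a bounded kernel part and an integrable density part. First I will bound the effective kernel uniformly on the visibility cap. Since $\eta(\mathbf{x})\le 1$, and $\bar{G}_{\mathrm{tx}}$ and $G_{\mathrm{rx}}(\cdot)$ are bounded by their peak gains $G_{\max}^{\mathrm{tx}}$ and $G_{\max}^{\mathrm{rx}}$, the only possible singularity of $\tilde g$ is the path-loss factor. For any $\mathbf{x}\in\mathcal{S}$ and $\|\mathbf{u}\|=R_{\oplus}$, the triangle inequality gives $\|\mathbf{x}-\mathbf{u}\|\ge R_{\mathrm{s}}-R_{\oplus}=h_{\mathrm{orb}}>0$. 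Hence
\begin{equation*}
0\le \tilde g(\mathbf{u},\mathbf{x})\le P_{\mathrm{tx}} G_{\max}^{\mathrm{tx}} G_{\max}^{\mathrm{rx}} h_{\mathrm{orb}}^{-\alpha}=:g_{\max}<\infty .
\end{equation*}
This is exactly where the hypothesis $h_{\mathrm{orb}}>0$ enters.

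Next I will reduce boundedness of $I(\mathbf{u},t)$ to integrability of the density, $I(\mathbf{u},t)\le g_{\max}\int_{\mathcal{C}(\mathbf{u})}\lambda(\mathbf{x},t)\,\mathrm{d}\mathbf{x}\le g_{\max}\int_{\mathcal{S}}\lambda\,\mathrm{d}\mathbf{x}$. For the static profile \eqref{eq:static_density}, I will write the area element on the shell as $\mathrm{d}\mathbf{x}=R_{\mathrm{s}}^2\cos\phi\,\mathrm{d}\phi\,\mathrm{d}\theta$ and integrate in longitude. This gives
\begin{equation*}
\int_{\mathcal{S}}\bar\lambda\,\mathrm{d}\mathbf{x}=\frac{N}{\pi}\int_{-\iota}^{\iota}\frac{\cos\phi\,\mathrm{d}\phi}{\sqrt{\sin^2\iota-\sin^2\phi}} .
\end{equation*}
Substituting $s=\sin\phi$ turns this into $\frac{N}{\pi}\int_{-\sin\iota}^{\sin\iota}(\sin^2\iota-s^2)^{-1/2}\mathrm{d}s=\frac{N}{\pi}\cdot\pi=N$. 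So the caustic singularity behaves like $(\iota-|\phi|)^{-1/2}$, which is integrable in one dimension, and the total mass is finite and equal to $N$. The result is $I(\mathbf{u},t)\le N g_{\max}$.

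For a time-varying $\lambda(\mathbf{x},t)$, I will appeal to the continuity equation \eqref{eq:continuity}. Integrating it over $\mathcal{S}$, which is a closed surface with no boundary, and applying the divergence theorem gives $\frac{\mathrm{d}}{\mathrm{d}t}\int_{\mathcal{S}}\lambda\,\mathrm{d}\mathbf{x}=0$. The total mass therefore stays equal to $N$, and the same bound holds for all $t$.

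The main obstacle is making the caustic integrability rigorous. The velocity field \eqref{eq:velocity_components} degenerates at $|\phi|=\iota$, so the divergence-theorem step needs care near the caustic lines. I would handle this by integrating over the truncated band $|\phi|\le\iota-\delta$, checking that the boundary flux term $\lambda v_\phi\cos\phi$ stays bounded there, and letting $\delta\to0$ using the explicit $\sin\phi$ substitution above. If the cap $\mathcal{C}(\mathbf{u})$ touches a caustic, the same one-dimensional substitution gives local integrability directly, with a sharper bound $g_{\max}\int_{\mathcal{C}(\mathbf{u})}\bar\lambda$.
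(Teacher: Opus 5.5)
Your core argument is essentially the paper's proof. You use the same kernel bound $P_{\mathrm{tx}}G_{\max}^2 h_{\mathrm{orb}}^{-\alpha}$ obtained from $\|\mathbf{x}-\mathbf{u}\|\ge h_{\mathrm{orb}}$, the same reduction to $\int_{\mathcal{S}}\lambda\,\mathrm{d}\mathbf{x}$, and the same sine substitution showing the caustic singularity integrates to $\pi$. Your normalization to total mass $N$ is a useful consistency check on \eqref{eq:static_density} that the paper omits. The paper only evaluates the static profile $\bar\lambda$ and never treats a general time-varying $\lambda$. Your attempt to do so is the one step that fails as written.

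\textbf{The divergence-theorem step.} The flux $\lambda\mathbf{v}$ is not a regular field on the closed surface $\mathcal{S}$. It is supported on the band $|\phi|\le\iota$, and its normal component does not vanish at the caustic circles. From \eqref{eq:velocity_components}, $v_\phi\cos\phi = v_{\mathrm{orb}}\sqrt{\sin^2\iota-\sin^2\phi}$, so $\bar\lambda v_\phi\cos\phi = N v_{\mathrm{orb}}/(2\pi^2 R_{\mathrm{s}}^2)$ for every $|\phi|<\iota$. The density diverges at exactly the rate at which $v_\phi$ vanishes. On your truncated band, the boundary term therefore does not become small as $\delta\to 0$. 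It vanishes only because the northward outflow at $\phi=\iota-\delta$ exactly equals the inflow at $\phi=-(\iota-\delta)$.

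So checking that the flux stays ``bounded'' is the wrong test. What you need is north--south flux matching. That holds for $\bar\lambda$, since it is precisely the flux-conservation property that defines \eqref{eq:static_density}. It does not hold for an arbitrary solution of \eqref{eq:continuity} under the single-branch, everywhere-northward $v_\phi$ of \eqref{eq:velocity_components}. You would need an explicit turn-around (re-injection) condition at the caustics, or a split of $\lambda$ into ascending and descending streams.

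\textbf{Simplest repair.} Drop the PDE argument. By construction, $\lambda(\cdot,t)$ is the continuum density of exactly $N$ satellites, so $\int_{\mathcal{S}}\lambda(\mathbf{x},t)\,\mathrm{d}\mathbf{x}=N$ for all $t$. That identity is all your bound $I(\mathbf{u},t)\le N g_{\max}$ requires.
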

	\begin{proof}
		The effective kernel is bounded by a finite constant $M_{\max} = P_{\mathrm{tx}} G_{\max}^2 h_{\mathrm{orb}}^{-\alpha} < \infty$. The interference integral is bounded by $I \le M_{\max} \int_{\mathcal{S}} \lambda(\mathbf{x}, t) \mathrm{d}\mathbf{x}$. The spatial integration over the density field contains a singularity at $\phi \to \pm\iota$. Using the variable substitution $u = \sin \phi / \sin \iota$, the probability measure evaluates analytically to $\int_{-\iota}^{\iota} (\sin^2 \iota - \sin^2 \phi)^{-1/2} \cos \phi \mathrm{d}\phi = \int_{-1}^{1} (1-u^2)^{-1/2} \mathrm{d}u = \pi$. Thus, from a real analysis perspective, the singularity is strictly Lebesgue integrable (order $-1/2$), ensuring $\sup_{\mathbf{u}, t} I(\mathbf{u}, t) < \infty$.
	\end{proof}
	
	This mathematical proof establishes that high-latitude interference surges are deterministic physical power concentrations rather than non-physical mathematical divergences.
	
	\vspace*{-2mm}
	\subsection{Evolution of Interference Statistics via Stochastic-Kinetic Closure}
	
	While Equation \eqref{eq:interference_evolution} governs the macroscopic deterministic flow of the aggregate interference power, the discrete nature of finite-sized constellations and the random spatial traffic requests introduce intrinsic spatial shot noise. To rigorously bridge deterministic orbital mechanics and stochastic interference fluctuations, we introduce a stochastic-kinetic closure grounded in point process limit theorems.
	
	We emphasize that the underlying Walker constellation forms a deterministic geometric lattice characterized by strict spatial regularity. However, the set of active interfering satellites is a randomly thinned subset of this lattice, driven by the independent spatial traffic activation field $\eta(\mathbf{x})$. Based on the independent thinning theorem for point processes, as the constellation becomes dense, the randomly thinned regular lattice converges weakly to an inhomogeneous Poisson point process (PPP). 
	For finite $N$, approximating the repulsive discrete lattice with a completely random PPP structurally ignores the inter-satellite minimum distance constraints (i.e., the hard-core nature). By the principles of stochastic ordering, removing this spatial regularity strictly overestimates the interference variance. Therefore, this PPP closure is not merely a heuristic approximation, but provides a mathematically guaranteed conservative (worst-case) upper bound for the outage probability.
	
	To formalize this reliability guarantee, we establish the following lemma regarding the stochastic ordering of the interference variance.
	
	\begin{Lemma}
		\label{lemma:ppp_bound}
		Let $I_{\mathrm{grid}}(\mathbf{u}, t)$ denote the true aggregate interference from the randomly thinned deterministic Walker lattice, and $I_{\mathrm{ppp}}(\mathbf{u}, t)$ denote the interference from the surrogate inhomogeneous PPP with an identical intensity measure $\Lambda(\mathbf{x}, t) = \eta(\mathbf{x}) \lambda(\mathbf{x}, t)$. Due to the strictly repulsive nature (hard-core constraint) of the orbital grid, the interference variance satisfies $\mathrm{Var}[I_{\mathrm{grid}}] \le \mathrm{Var}[I_{\mathrm{ppp}}]$. Consequently, under the adopted Gamma moment-matching closure, the PPP surrogate provides a conservative approximation to the system outage: $\mathbb{P}[\gamma_{\mathrm{grid}} < \gamma_{\mathrm{th}}] \le \mathbb{P}[\gamma_{\mathrm{ppp}} < \gamma_{\mathrm{th}}]$.
	\end{Lemma}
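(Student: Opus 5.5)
The plan has two parts: first the variance ordering, by explicit second-moment computation via Campbell's theorem; then the outage ordering, by monotonicity of the Gamma-approximated outage in the variance.

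\textbf{Step 1: variance of the thinned lattice.} Write $I_{\mathrm{grid}}=\sum_{i} B_i\, g(\mathbf{u},\mathbf{x}_i)H_i$. Here $\{\mathbf{x}_i\}$ are the deterministic Walker positions at time $t$ inside $\mathcal{C}(\mathbf{u})$. The $B_i\sim\mathrm{Bernoulli}(\eta(\mathbf{x}_i))$ are the independent activation marks and $H_i$ are i.i.d.\ fading gains. Independence of the marks kills all cross terms, so
\begin{equation*}
\mathrm{Var}[I_{\mathrm{grid}}]=\sum_i\Big(\eta(\mathbf{x}_i)\,\mathbb{E}[H^2]-\eta(\mathbf{x}_i)^2(\mathbb{E}[H])^2\Big)g(\mathbf{u},\mathbf{x}_i)^2 .
\end{equation*}
If fading is omitted, the same formula holds with $H\equiv 1$.

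\textbf{Step 2: variance of the PPP surrogate.} By Campbell's second-moment formula,
\begin{equation*}
\mathrm{Var}[I_{\mathrm{ppp}}]=\mathbb{E}[H^2]\int_{\mathcal{C}(\mathbf{u})}\eta(\mathbf{x})\lambda(\mathbf{x},t)\,g(\mathbf{u},\mathbf{x})^2\,\mathrm{d}\mathbf{x}.
\end{equation*}
I identify the lattice sum $\sum_i \eta(\mathbf{x}_i)g(\mathbf{u},\mathbf{x}_i)^2$ with this integral through the continuum representation (\ref{eq:continuity})--(\ref{eq:static_density}). This is exactly the "identical intensity measure" assumption of the lemma: the lattice empirical measure is replaced by $\lambda\,\mathrm{d}\mathbf{x}$, so the first moments match.

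\textbf{Step 3: the variance gap.} Subtracting Step 1 from Step 2 gives
\begin{equation*}
\mathrm{Var}[I_{\mathrm{ppp}}]-\mathrm{Var}[I_{\mathrm{grid}}]=(\mathbb{E}[H])^2\int_{\mathcal{C}(\mathbf{u})}\eta^2\lambda g^2\,\mathrm{d}\mathbf{x}\ \ge 0 .
\end{equation*}
Since $\mathbb{E}[I]$ is the same for both models, this proves $\mathrm{Var}[I_{\mathrm{grid}}]\le\mathrm{Var}[I_{\mathrm{ppp}}]$. The interpretation is that the lattice has a non-positive reduced pair-correlation excess: hard-core repulsion, and zero variance in the count of un-thinned points.

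\textbf{Step 4: outage ordering.} Under Gamma moment matching, the interference is modelled as $\mathrm{Gamma}(k,\vartheta)$ with $k=\mathbb{E}[I]^2/\mathrm{Var}[I]$ and $\vartheta=\mathrm{Var}[I]/\mathbb{E}[I]$. With fixed mean and increasing variance, the Gamma family is ordered in the convex order. Conditioning on $S$, outage is $\mathbb{P}[I>S/\gamma_{\mathrm{th}}]$ (adding noise if present), i.e.\ a tail probability at a threshold.

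\textbf{Main obstacle.} The tail probability of a Gamma law is not monotone in variance for every threshold; convex order only controls tails above roughly the mean. I would first try to show that, in the outage-relevant regime $S/\gamma_{\mathrm{th}}\ge\mathbb{E}[I]$ (true for practical $\gamma_{\mathrm{th}}$), $\partial_\vartheta\,\mathbb{P}[\mathrm{Gamma}(m/\vartheta,\vartheta)>z]\ge 0$ for $z\ge m$. For this I would use the single-crossing property of the two Gamma CDFs with equal mean (Karlin–Novikoff cut criterion). Averaging over the serving link then gives the stated inequality. Outside this regime, the claim should be stated as holding for thresholds above the mean interference level.
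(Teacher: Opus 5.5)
\textbf{Steps 1--3 (variance ordering).} Your route is different from the paper's and more explicit. The paper argues qualitatively: the Walker grid is a repulsive (hard-core) process, so by the factorial-moment properties of such processes, replacing it with a PPP ``amplifies'' the variance. You instead condition on the lattice positions at time $t$, so the only randomness is the independent Bernoulli activation and the fading. This gives the gap in closed form, which buys two things. First, it is an actual proof. Second, it shows the entire gap is the diagonal term $(\mathbb{E}[H])^2\sum_i\eta(\mathbf{x}_i)^2g(\mathbf{u},\mathbf{x}_i)^2$, i.e.\ Bernoulli count variance $\eta(1-\eta)$ versus Poisson count variance $\eta$ at each site. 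The separation between \emph{distinct} satellites plays no role.

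One caveat on Step 2. The computation is exact if the surrogate PPP has the atomic intensity $\sum_i\eta(\mathbf{x}_i)\delta_{\mathbf{x}_i}$, which is the only way the two intensity measures can literally coincide. With the absolutely continuous intensity $\eta\lambda\,\mathrm{d}\mathbf{x}$, matching first moments does not match $\sum_i\eta g^2$ with $\int\eta\lambda g^2$. The resulting quadrature error is of order $\eta$, while your gap is of order $\eta^2$, and $g^2\propto\|\mathbf{x}-\mathbf{u}\|^{-2\alpha}$ is dominated by the few nearest satellites. So present Step 2 as a modelling identification, not as a consequence of the hypothesis.

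\textbf{Step 4 (outage ordering): genuine gap.} The paper's proof simply asserts that the outage functional is monotonically increasing in $\sigma_{\mathrm{I}}^2$. You are right that this fails in general, but your repair is also wrong.

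For equal-mean Gamma laws, the single CDF crossing lies \emph{above} the common mean, so the tail ordering already fails at $z=\mathbb{E}[I]$. Take unit mean, $\mathrm{Exp}(1)$ (variance $1$) and $\mathrm{Gamma}(2,1/2)$ (variance $1/2$). Then $\mathbb{P}[I>1]=e^{-1}\approx0.368<3e^{-2}\approx0.406$, and the CDFs cross only at $z^*\approx1.256$, the root of $e^{z}=1+2z$.

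The Karlin--Novikoff cut criterion gives convex order, i.e.\ ordering of $\mathbb{E}[(I-z)^+]$ for all $z$. It does not order $\mathbb{P}[I>z]$, which is ordered only for $z\ge z^*$.

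Averaging over the serving link also cannot transfer a pointwise ordering. The Nakagami signal has support down to zero, so $S/\gamma_{\mathrm{th}}-\sigma^2$ falls below $z^*$ with positive probability, where the ordering reverses.

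In fact the averaged inequality fails under the paper's own closure. Take Theorem~\ref{thm:outage_probability} with $m_{\mathrm{S}}=2$, $\sigma^2=0$, $\gamma_{\mathrm{th}}=1$, $\theta_{\mathrm{S}}=1$, $\mu_{\mathrm{I}}=1$ (mean SIR $3$ dB). Then $\sigma_{\mathrm{I}}^2=1$ gives $P_{\mathrm{out}}=(1+\theta_{\mathrm{S}})^{-2}=1/4$. But $\sigma_{\mathrm{I}}^2=1/2$ gives $P_{\mathrm{out}}=(1+6\theta_{\mathrm{S}})(1+2\theta_{\mathrm{S}})^{-3}=7/27>1/4$. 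So the lower-variance (grid) model has the larger outage.

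Hence the outage half of the lemma cannot be proved as stated, by either argument. What survives is a low-outage (high-SIR) statement. It needs an explicit condition on the law of $S/\gamma_{\mathrm{th}}-\sigma^2$ relative to the crossing point $z^*$, not the condition $S/\gamma_{\mathrm{th}}\ge\mathbb{E}[I]$.
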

	\begin{proof}
		The discrete Walker constellation enforces a minimum inter-satellite angular separation, fundamentally constituting a repulsive point process. According to the void probability and second-order factorial moment properties of repulsive processes \cite{Baccelli2009}, replacing the hard-core lattice with a completely spatially random PPP (which permits arbitrarily close satellite pairs) strictly amplifies the spatial variance of the interference field. Therefore, substituting $\sigma_{\mathrm{I}}^2$ evaluated under the PPP assumption into the monotonically increasing outage functional guarantees a worst-case performance bound.
	\end{proof}
	
	Under this rigorous closure formulation, the time-varying mean $\mu_{\mathrm{I}}(t) = \mathbb{E}[I(\mathbf{u}, t)]$ is precisely governed by the linear transport equation derived in Proposition \ref{prop:interference_pde}:
	\begin{equation} 
		\frac{\mathrm{d} \mu_{\mathrm{I}}}{\mathrm{d}t} = \mathcal{T}_{\mathrm{adv}} - \mathcal{T}_{\mathrm{flux}} =
		\mathcal{T}_{\mathrm{adv}}^{\mathrm{geo}} + \mathcal{T}_{\mathrm{adv}}^{\mathrm{traf}} - \mathcal{T}_{\mathrm{flux}} .
	\end{equation}
	
	The interference variance $\sigma_{\mathrm{I}}^2(t) = \mathrm{Var}[I(\mathbf{u}, t)]$ is determined by the second-order moment measure of the PPP. It is expressed as the integral of the squared response kernel weighted by the active density. Differentiating the variance with respect to time and invoking the Reynolds transport theorem yields a conservation law for the variance:
	\begin{equation} 
		\label{eq:variance_evolution}
		\frac{\mathrm{d} \sigma_{\mathrm{I}}^2}{\mathrm{d}t} = \int_{\mathcal{C}(\mathbf{u})} \lambda \mathbf{v} \cdot \nabla (\eta g^2) \, \mathrm{d}\mathbf{x} - \oint_{\partial \mathcal{C}(\mathbf{u})} \eta g^2 \lambda \mathbf{v} \cdot \mathbf{n} \, \mathrm{d}l.
	\end{equation}
	
	The variance evolution is governed by the interplay of geometric scanning and traffic gradient drift. In regions of orbital convergence, both the average interference level and the dispersion of the interference intensify. This second-order dynamic is critical for accurately modeling the tail behavior of the interference distribution during outage analysis.
	
	\vspace*{-1mm}
	\section{Dynamic Outage Probability Analysis}\label{S4}
	
	Having established the deterministic evolution laws for the mean and variance of the aggregate interference, we now derive the  time-varying outage probability (TVOP). This metric quantifies the instantaneous reliability of the link conditioned on the dynamic state of the KIF.
	
	\vspace*{-2mm}
	\subsection{Statistical Characterization via Moment Closure}
	
	To analytically evaluate the outage probability without relying on numerical inversions of the characteristic function, we employ a mathematically tractable moment-matching closure. We approximate the probability density function of the aggregate interference as a Gamma distribution, $I \sim \text{Gamma}(k_{\mathrm{I}}(t), \theta_{\mathrm{I}}(t))$, whose parameters are uniquely determined by the instantaneous kinetic fluid moments:
	\begin{equation} 
		\label{eq:moment_matching}
		k_{\mathrm{I}}(t) = \frac{\mu_{\mathrm{I}}^2(t)}{\sigma_{\mathrm{I}}^2(t)}, \quad \theta_{\mathrm{I}}(t) = \frac{\sigma_{\mathrm{I}}^2(t)}{\mu_{\mathrm{I}}(t)}.
	\end{equation}
	
	This moment-closure approach is a practical engineering approximation suitable for dense, interference-limited regimes where the aggregate interference consists of numerous contributions. It translates the macroscopic deterministic flow ($\mu_{\mathrm{I}}$) and spatial dispersion ($\sigma_{\mathrm{I}}^2$) into a tractable dynamic reliability metric.
	
	\vspace*{-2mm}
	\subsection{Time-Varying Outage Probability}
	
	Conditioning on the aggregate interference, we obtain the closed-form expression for the outage probability $P_{\mathrm{out}}(t) = \mathbb{P}[\gamma(\mathbf{u}, t) < \gamma_{\mathrm{th}}]$, where $\gamma_{\mathrm{th}}$ is the predefined SINR threshold.
	
	\begin{Theorem}
		\label{thm:outage_probability}
		Consider a user served by a Nakagami-$m$ fading signal in the presence of Gamma-distributed fluid interference. The instantaneous outage probability $P_{\mathrm{out}}(t)$ is given by:
		\begin{align} 
			\label{eq:closed_form_outage}
			P_{\mathrm{out}}(t) = & 1 - \sum_{n=0}^{m_{\mathrm{S}}-1} \sum_{j=0}^{n} \Xi_{n,j}(t) \cdot \Gamma(k_{\mathrm{I}}(t)+j) \nonumber\\
			&\times  \left( \frac{1}{\theta_{\mathrm{I}}(t)} + \frac{\gamma_{\mathrm{th}}}{\theta_{\mathrm{S}}(t)} \right)^{-(k_{\mathrm{I}}(t)+j)},
		\end{align}
		where the time-varying coefficient $\Xi_{n,j}(t)$ is defined as:
		\begin{equation} 
			\Xi_{n,j}(t) = \frac{\binom{n}{j} (\gamma_{\mathrm{th}}\sigma^2)^{n-j} \gamma_{\mathrm{th}}^j}{n! \, \theta_{\mathrm{S}}(t)^n \Gamma(k_{\mathrm{I}}(t)) \theta_{\mathrm{I}}(t)^{k_{\mathrm{I}}(t)}} \exp\left(-\frac{\gamma_{\mathrm{th}}\sigma^2}{\theta_{\mathrm{S}}(t)}\right).
		\end{equation}
		Here $\theta_{\mathrm{S}}(t)\! =\! \bar{S}(t) / m_{\mathrm{S}}$ is the time-varying scale parameter of the Gamma-distributed desired signal power, and $\bar{S}(t)$ is the local mean received signal power excluding small-scale fading.
	\end{Theorem}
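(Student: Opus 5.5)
We condition on the aggregate interference and average the conditional success probability of the Nakagami-$m$ desired link against the Gamma law of $I$ fixed by the moment matching in \eqref{eq:moment_matching}.

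\textbf{Step 1: the desired signal.} Under Nakagami-$m_{\mathrm{S}}$ fading with integer $m_{\mathrm{S}}$, the power $S(t)$ in \eqref{eq:signal} is $\mathrm{Gamma}(m_{\mathrm{S}},\theta_{\mathrm{S}}(t))$ with $\theta_{\mathrm{S}}=\bar S/m_{\mathrm{S}}$. Its complementary CDF is therefore the finite Erlang sum
\[
\mathbb{P}[S>s]=e^{-s/\theta_{\mathrm{S}}}\sum_{n=0}^{m_{\mathrm{S}}-1}\frac{1}{n!}\Big(\frac{s}{\theta_{\mathrm{S}}}\Big)^{n}.
\]

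\textbf{Step 2: conditioning on interference.} The SINR is $\gamma=S/(I+\sigma^2)$. Given $I=x$, the success event is $S>\gamma_{\mathrm{th}}(x+\sigma^2)$, which is independent of $I$ because the serving-link fading is independent of the interferer field. Hence
\[
1-P_{\mathrm{out}}=\mathbb{E}_I\Big[e^{-\gamma_{\mathrm{th}}(I+\sigma^2)/\theta_{\mathrm{S}}}\sum_{n=0}^{m_{\mathrm{S}}-1}\frac{\gamma_{\mathrm{th}}^n(I+\sigma^2)^n}{n!\,\theta_{\mathrm{S}}^n}\Big].
\]
Factoring out $\exp(-\gamma_{\mathrm{th}}\sigma^2/\theta_{\mathrm{S}})$, we expand $(I+\sigma^2)^n=\sum_{j}\binom{n}{j}I^j\sigma^{2(n-j)}$. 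This gives the coefficient $\binom{n}{j}(\gamma_{\mathrm{th}}\sigma^2)^{n-j}\gamma_{\mathrm{th}}^j/(n!\,\theta_{\mathrm{S}}^n)$ multiplying $\mathbb{E}[I^je^{-\gamma_{\mathrm{th}}I/\theta_{\mathrm{S}}}]$.

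\textbf{Step 3: the Gamma integral.} With the density $x^{k_{\mathrm{I}}-1}e^{-x/\theta_{\mathrm{I}}}/(\Gamma(k_{\mathrm{I}})\theta_{\mathrm{I}}^{k_{\mathrm{I}}})$, the standard identity $\int_0^\infty x^{a-1}e^{-bx}\,dx=\Gamma(a)b^{-a}$ with $a=k_{\mathrm{I}}+j$ and $b=1/\theta_{\mathrm{I}}+\gamma_{\mathrm{th}}/\theta_{\mathrm{S}}$ yields
\[
\mathbb{E}[I^je^{-\gamma_{\mathrm{th}}I/\theta_{\mathrm{S}}}]=\frac{\Gamma(k_{\mathrm{I}}+j)\,b^{-(k_{\mathrm{I}}+j)}}{\Gamma(k_{\mathrm{I}})\,\theta_{\mathrm{I}}^{k_{\mathrm{I}}}}.
\]
Collecting the prefactors gives exactly $\Xi_{n,j}(t)$ and \eqref{eq:closed_form_outage}. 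Time enters only parametrically, through $k_{\mathrm{I}}(t)$ and $\theta_{\mathrm{I}}(t)$ (from $\mu_{\mathrm{I}}$ and $\sigma_{\mathrm{I}}^2$ via Proposition~\ref{prop:interference_pde} and \eqref{eq:variance_evolution}) and through $\theta_{\mathrm{S}}(t)$.

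\textbf{Main obstacle.} The algebra is routine; the difficulty is justifying the setup. The Erlang form in Step~1 requires $m_{\mathrm{S}}$ to be an integer, which must be assumed explicitly. The conditioning in Step~2 requires independence between the serving link and the interference. Strictly, the max-power association couples $\mathbf{x}_0$ to the interferer set. I would handle this as the paper does: treat $\bar S(t)$ as the deterministic local mean of the serving link and take $I$ from the mean-field PPP closure of Lemma~\ref{lemma:ppp_bound}, which is independent by construction. I would also note that $b>0$ and $k_{\mathrm{I}}>0$, so the expansion is finite and every integral converges.
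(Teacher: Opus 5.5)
Your proof is correct and follows the same route as the paper. The paper states Theorem~1 without a written proof beyond ``conditioning on the aggregate interference,'' and the structure of $\Xi_{n,j}(t)$ encodes exactly your steps: the Erlang CCDF for integer $m_{\mathrm{S}}$, the binomial expansion of $(I+\sigma^2)^n$, and the Gamma integral $\int_0^\infty x^{a-1}e^{-bx}\,dx=\Gamma(a)b^{-a}$. You are also right to state the integer-$m_{\mathrm{S}}$ and signal--interference independence assumptions explicitly, since the paper leaves both implicit.
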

	
	Theorem \ref{thm:outage_probability} provides an explicit analytical mapping from the fluid state moments to the system reliability, enabling the rapid evaluation of outage dynamics without the computational burden of extensive Monte Carlo simulations.
	
	\vspace*{-2mm}
	\subsection{Differential Sensitivity Analysis}
	
	To elucidate the physical drivers governing outage events, we analyze the temporal derivative of $P_{\mathrm{out}}(t)$. Using the chain rule, its temporal evolution is directly linked to the fluid moment state vector:
	\begin{equation} 
		\label{eq:outage_dynamics}
		\frac{\mathrm{d} P_{\mathrm{out}}}{\mathrm{d}t} = \underbrace{\mathbf{w}^T \boldsymbol{\mathcal{T}}_{\mathrm{adv}}}_{\text{Geometry Drift}} - \underbrace{\mathbf{w}^T \boldsymbol{\mathcal{T}}_{\mathrm{flux}}}_{\text{Boundary Flux}},
	\end{equation}
	where $\mathbf{w} = \big[\frac{\partial P_{\mathrm{out}}}{\partial \mu_{\mathrm{I}}}, \frac{\partial P_{\mathrm{out}}}{\partial \sigma_{\mathrm{I}}^2}\big]^T$ denotes the sensitivity vector, and $\boldsymbol{\mathcal{T}}_{\mathrm{adv}}$ and $\boldsymbol{\mathcal{T}}_{\mathrm{flux}}$ represent the corresponding $2 \times 1$ advection and boundary flux column vectors for the fluid moments. In high-latitude regions, the boundary flux term becomes the dominant driver of outage surges. This confirms that the rapid deterioration of SINR is not merely a geometric coincidence, but a deterministic consequence of the hydrodynamic flux crossing the visibility horizon.
	
	\begin{table}[t]
		\vspace*{-1mm}
		\scriptsize
		\caption{Default Simulation Parameters}
		\vspace*{-3mm}
		\label{tab:sim_params}
		\centering
		\renewcommand{\arraystretch}{1.1}
		\begin{tabular}{l c c}
			\toprule
			Parameter & Symbol & Value \\
			\midrule
			Orbital altitude & $h_{\mathrm{orb}}$ & 1000 km  \\
			Inclination angle & $\iota$ & $85^{\circ}$ \\
			Total satellites & $N$ & 960 \\
			Orbital velocity & $v_{\mathrm{orb}}$ & 7.35 km/s \\
			Carrier frequency & $f_{\mathrm{c}}$ & 28 GHz \\
			Transmit power & $P_{\mathrm{tx}}$ & 30 dBm \\
			Path loss exponent & $\alpha$ & 3.5 \\
			Noise power & $\sigma^2$ & $-95$ dBm \\
			Peak antenna gain & $G_{\max}$ & 32 dBi \\
			Side lobe attenuation & $\Delta G$ & 18 dB \\
			Small-scale fading & $m_{\mathrm{S}}$ & 2 \\
			SINR threshold & $\gamma_{\mathrm{th}}$ & 0 dB \\
			Min elevation angle & $\varepsilon_{\min}$ & $10^{\circ}$ \\
			\bottomrule
		\end{tabular}
		\vspace*{-3mm}
	\end{table}
	
	\vspace*{-2mm}
	\section{Numerical Results}\label{S5}
	
	This section validates the proposed KIF framework and derives system-level design insights. The theoretical predictions are benchmarked against ephemeris-driven Monte Carlo simulations, which instantiate discrete Walker delta constellations based on Keplerian mechanics. To ensure statistical convergence, empirical benchmarks represent the ensemble average of 50,000 independent snapshots.The default simulation parameters are listed in Table \ref{tab:sim_params}.
	
	\begin{figure}[h]
		\vspace*{-1mm}
		\centering
		\includegraphics[width=0.9\linewidth]{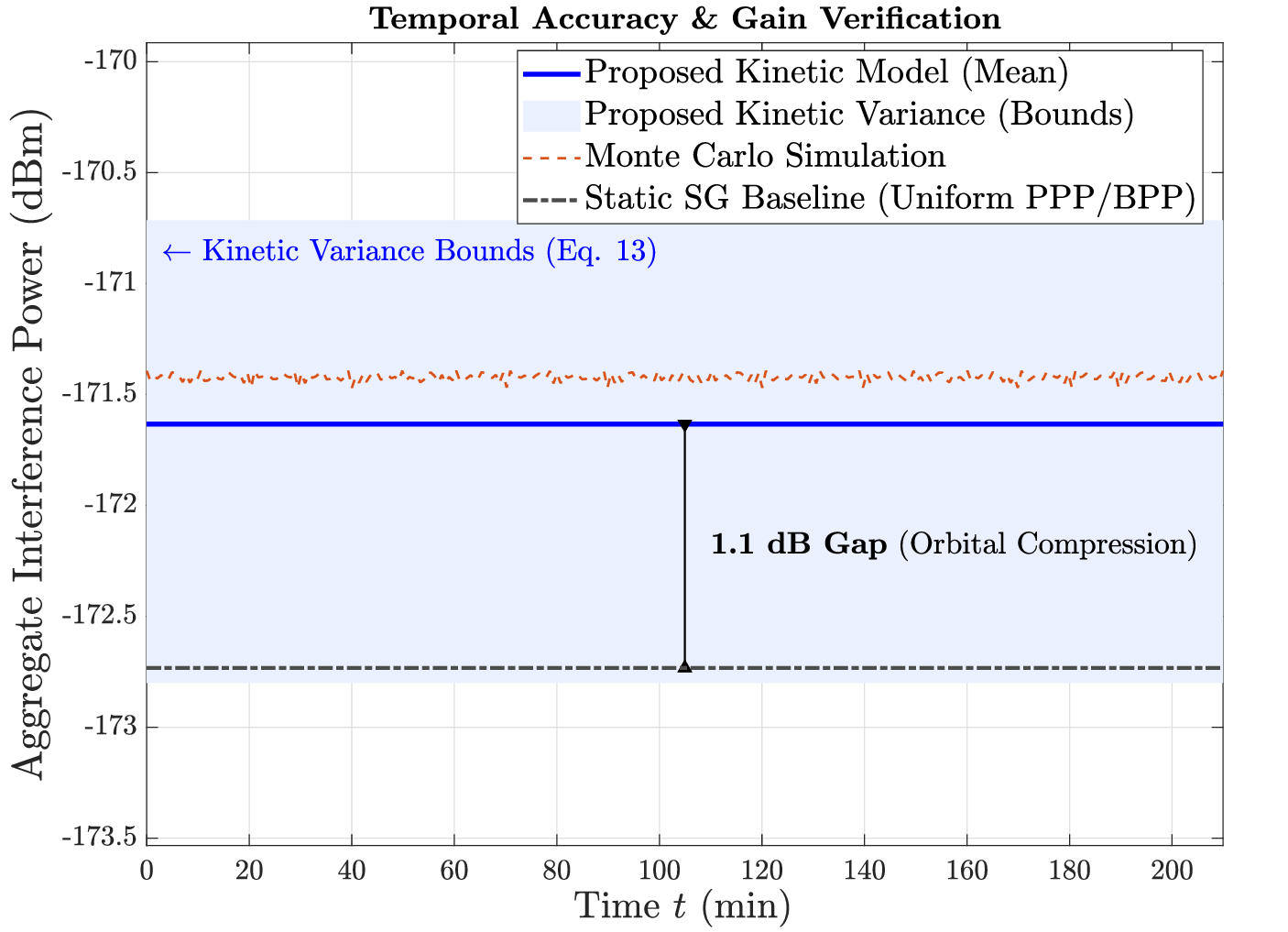} 
		\vspace*{-4mm}
		\caption{Temporal evolution of aggregate interference power.}
		\label{fig:temporal_accuracy}
		\vspace*{-4mm}
	\end{figure}
	
	\begin{figure}[b]
		\vspace*{-5mm}
		\centering
		\includegraphics[width=0.9\linewidth]{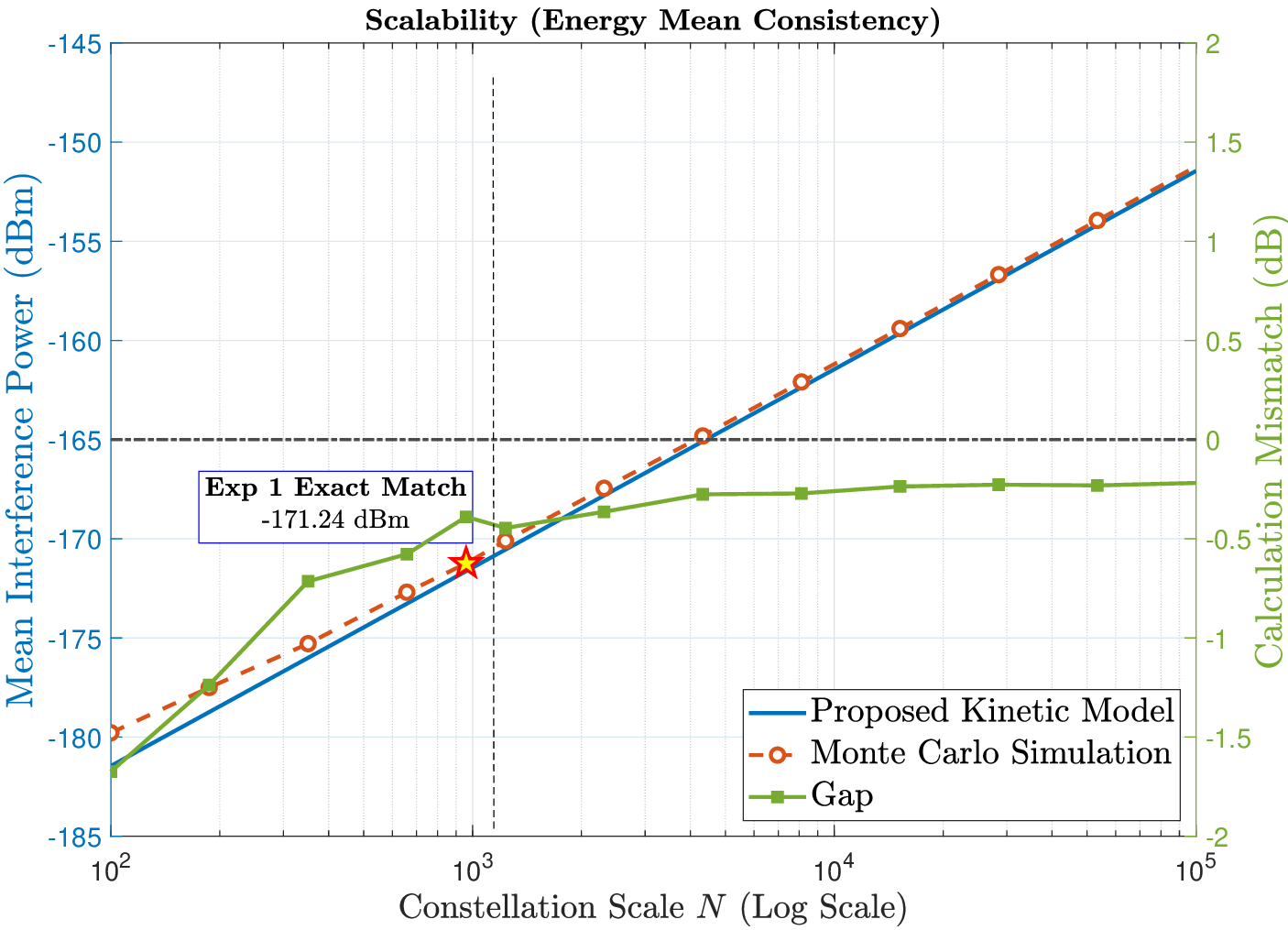} 
		\vspace*{-4mm}
		\caption{Scalability verification across constellation scales.}
		\vspace*{-1mm}
		\label{fig:scalability}
	\end{figure}
	
	\vspace*{-2mm}
	\subsection{Validation of Kinetic Interference Dynamics}
	
	We first validate the kinetic interference evolution laws against Monte Carlo simulations. Fig. \ref{fig:temporal_accuracy} illustrates the temporal evolution of the aggregate interference power. The kinetic mean accurately tracks the empirical mean, confirming that the macroscopic fluid representation captures the first-order statistical behavior. The predicted variance also captures the scale of the microscopic stochastic fluctuations. Furthermore, we benchmark our framework against a conventional homogeneous static SG model. As shown in Fig. \ref{fig:temporal_accuracy}, the static baseline systematically underestimates the aggregate interference by approximately 1.1 dB. This deviation highlights the fundamental limitation of assuming a uniform spatial distribution. By uniquely integrating the velocity field, the proposed KIF framework analytically predicts the continuous interference surges driven by the orbital compression effect, entirely bypassing the computationally prohibitive need for snapshot-by-snapshot spatial refitting.

	To verify the framework's applicability across diverse network scales, Fig. \ref{fig:scalability} analyzes the mean aggregate interference power as a function of the constellation size $N$. As the constellation density increases, the modeling error decreases. This confirms that the discrete Walker geometry approaches the continuum regime, validating the continuum hypothesis for massive networks. 
	
	\begin{figure}[h]
		\vspace*{-4.5mm}
		\centering
		\subfigure[{\scriptsize Latitudinal density profile}] {
			\includegraphics[width=0.8\linewidth,trim=0 3 0 0, clip]{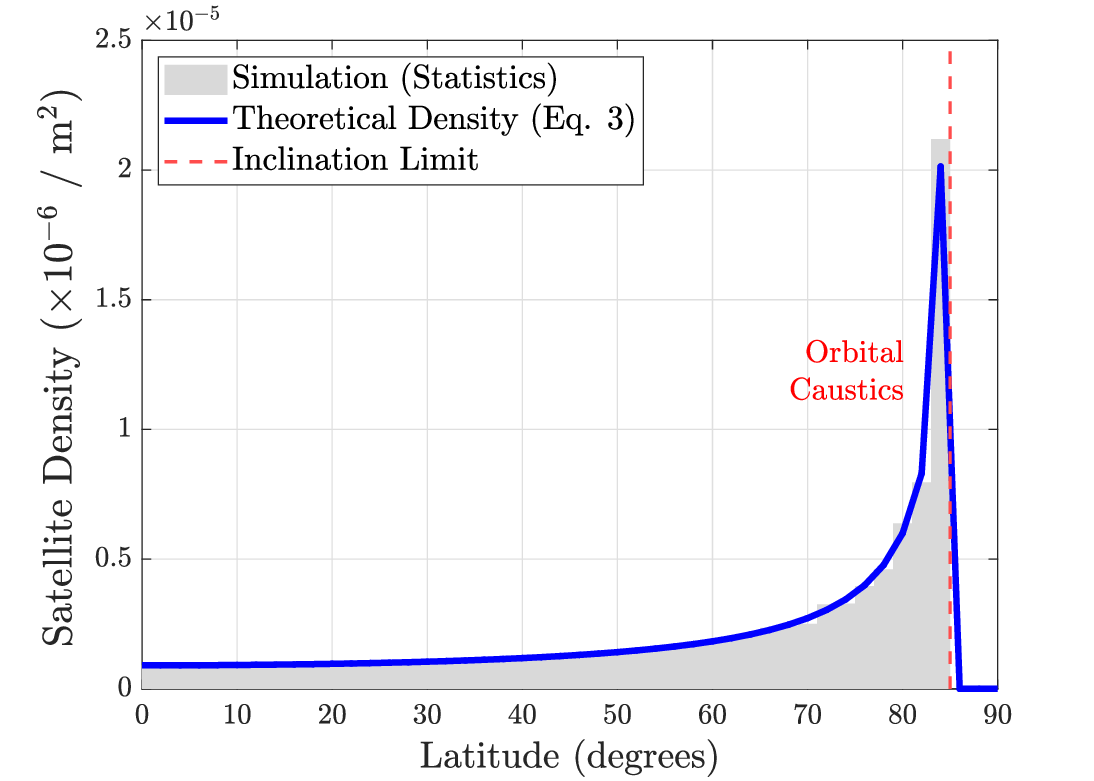}
			\vspace*{-5mm}
			\label{fig:compression_density} 
		}
		\subfigure[{\scriptsize Orbital compression effect}] {
			\includegraphics[width=0.8\linewidth]{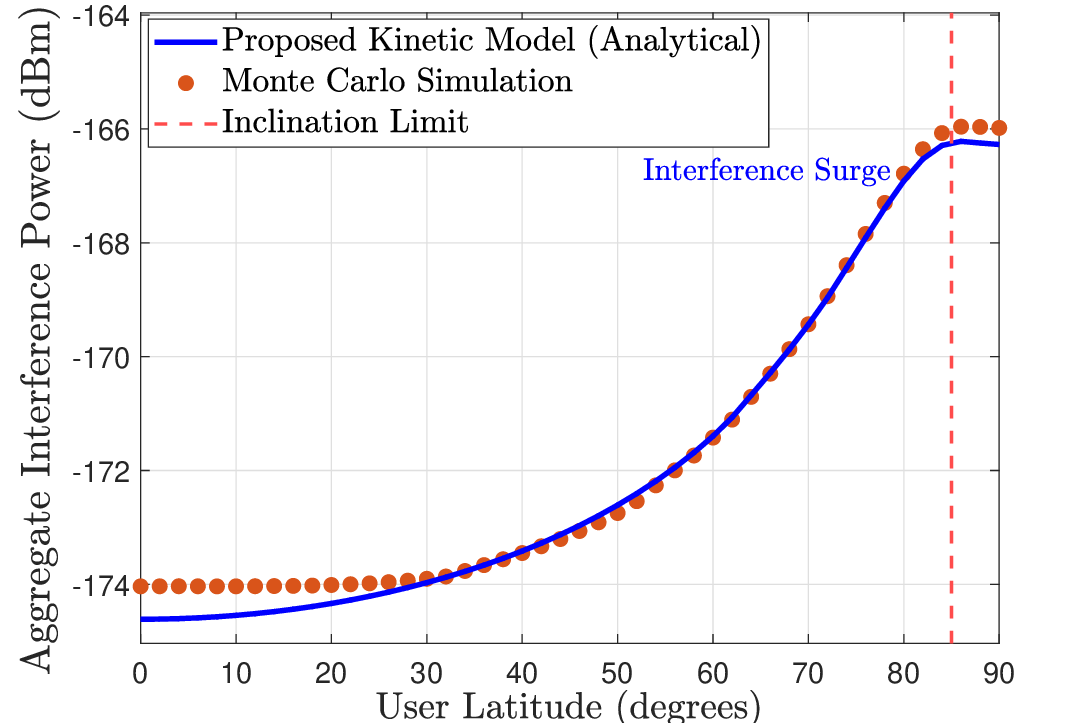}
			\label{fig:compression_power}
		}
		\vspace*{-3mm}
		\caption{Verification of orbital caustics.}
		\vspace*{-5mm}
		\label{fig:orbital_compression}
	\end{figure}
	
	\vspace*{-2mm}
	\subsection{Orbital Caustics and Dynamic Reliability}
	
	The spatial non-stationarity of the LEO constellation fundamentally alters the interference landscape across latitudes. Fig. \ref{fig:orbital_compression} provides a comparative analysis of this phenomenon. As predicted by the kinematic theory, Fig. \ref{fig:orbital_compression}(a) corroborates the existence of orbital caustics. The theoretical density profile exhibits an asymptotic surge near the inclination limit, closely matching the empirical histogram. Consequently, as shown in Fig. \ref{fig:orbital_compression}(b), the aggregate interference power rises monotonically with latitude, culminating in a deterministic surge near the caustic boundary. 
	
	\begin{figure}[t]
		\vspace*{-1mm}
		\centering
		\includegraphics[width=0.85\linewidth]{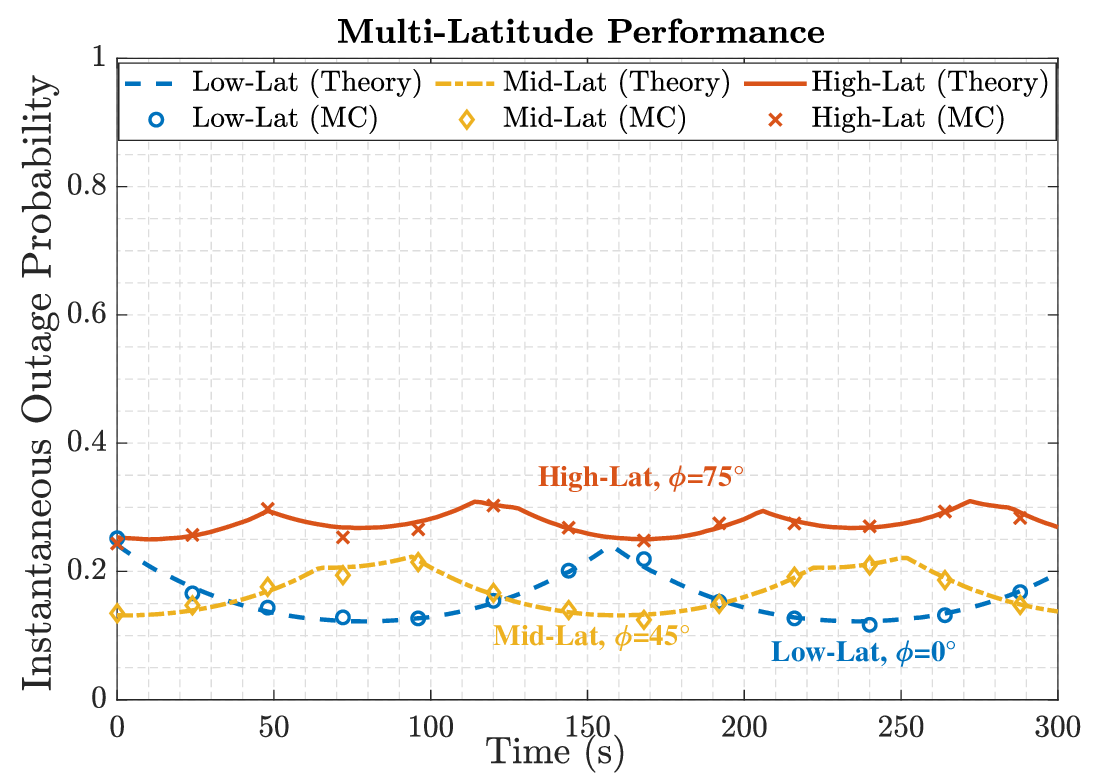} 
		\vspace*{-4mm}
		\caption{Comparison of time-varying outage probability at different latitudes.}
		\vspace*{-3mm}
		\label{fig:tvop_tradeoff}
	\end{figure}
	
	\begin{figure}[t]
		\vspace*{-1mm}
		\centering
		\includegraphics[width=0.8\linewidth]{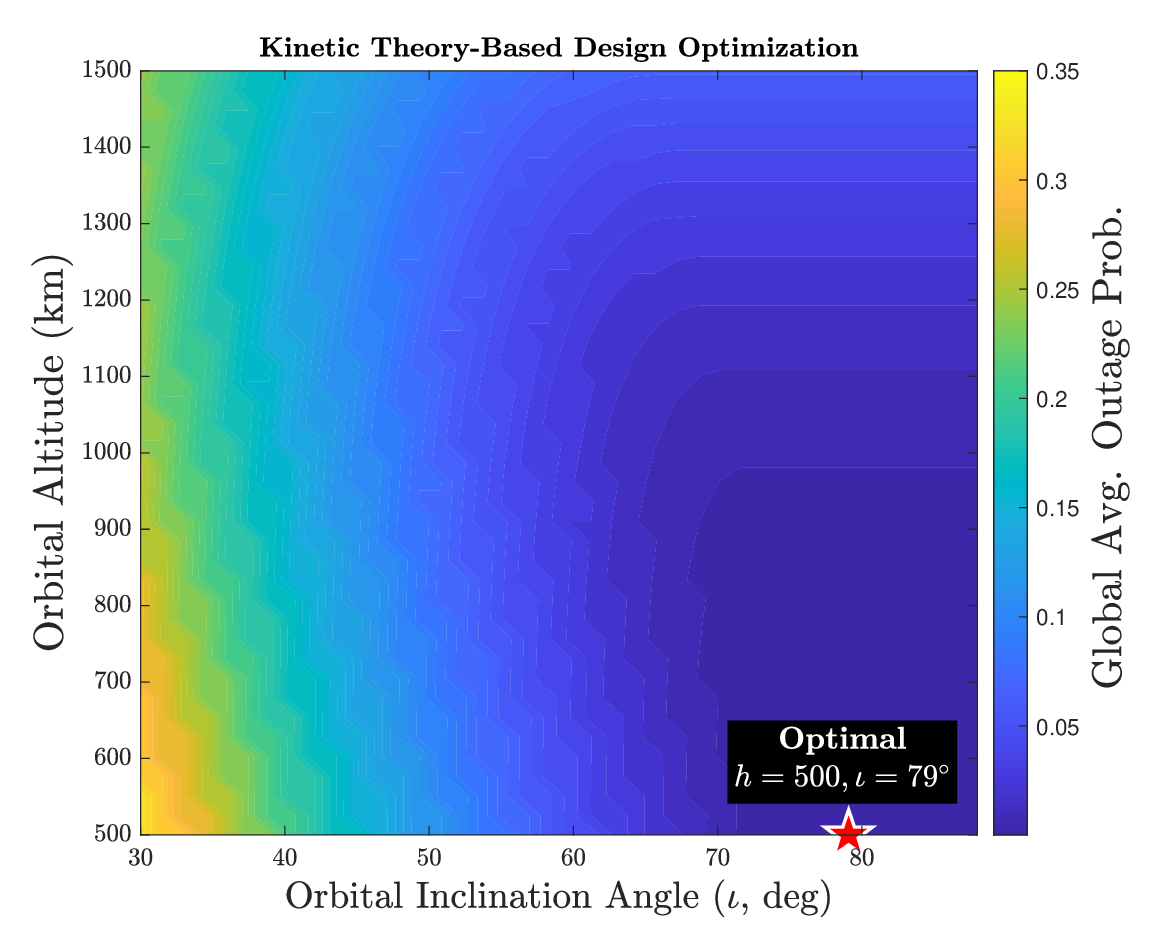} 
		\vspace*{-5mm}
		\caption{Global parameter optimization via kinetic theory.}
		\label{fig:optimal_design}
		\vspace*{-4mm}
	\end{figure}
	
	To quantify the direct impact of these orbital dynamics on link performance, Fig. \ref{fig:tvop_tradeoff} evaluates the TVOP across representative latitudes. The equatorial user exhibits the most robust reliability due to the reduced interference floor caused by orbital rarefaction. In contrast, the high-latitude scenario suffers significant deterioration despite enhanced satellite visibility. This confirms that the interference density scaling outpaces the geometric diversity gain. Furthermore, the asynchronous temporal fluctuations across latitudes underscore that aggregate interference is not mere background noise, but a structured spatiotemporal wave governed by the deterministic phasing of orbital planes.
	
	\vspace*{-2mm}
	\subsection{System-Level Topology Optimization}
	
	Exploiting the analytical tractability of the framework, we perform a global parameter optimization to minimize the average outage probability. The performance heat map in Fig. \ref{fig:optimal_design} reveals that system reliability is governed by a fundamental trade-off between coverage completeness and geometric interference isolation. Under the evaluated Ka-band regime and specified antenna patterns, the lowest average outage is obtained near an inclination of $79^{\circ}$, rather than the typically assumed $90^{\circ}$ polar orbit. 
	
	Physically, this angle represents the minimum sufficient inclination required to close the polar coverage gap at an altitude of 1000 km. Increasing the inclination beyond this point yields diminishing returns in coverage but severely exacerbates the orbital compression effect. While the exact optimal angle is inherently sensitive to specific payload parameters, the underlying mechanism-driven trend remains robust: the optimal strategy is to select the minimum inclination that satisfies global connectivity, thereby maximizing the spatial separation of orbital planes. This joint optimization provides a physics-based guideline for constellation design that balances capacity density with interference immunity.
	
	\vspace*{-2mm}
	\section{Conclusion}\label{S6}
	
	This paper presented a KIF framework to characterize the spatiotemporal interference dynamics in LEO mega-constellations. By mapping discrete orbital mechanics to a macroscopic fluid flow, we derived a hydrodynamic partial differential equation and closed-form outage probabilities that explicitly link physical orbital compression to high-latitude interference surges. Under the evaluated configuration, an inclination of approximately $79^{\circ}$ at low altitudes provides an effective trade-off between coverage continuity and geometric interference isolation, offering an alternative to a strict polar deployment. This continuum approach provides a scalable, physics-based methodology for evaluating and optimizing interference-limited non-terrestrial networks.
	
	\small

\end{document}